%\documentclass{llncs}

% LATIN requests letter paper format
\documentclass[runningheads,letterpaper,10pt]{llncs}

\newcommand{\longversion}[1]{#1}
\newcommand{\shortversion}[1]{}

%\usepackage[margin=3cm]{geometry}
%\usepackage{float}
%Fonctionnalit�s primitives
%\usepackage[utf8]{inputenc}
%\usepackage[T1]{fontenc}
%\usepackage[english]{babel}%Langue du document [english,french]
%\usepackage[sort,numbers]{natbib}
\usepackage{booktabs}
\usepackage[bookmarks=false,pdfpagelabels=false]{hyperref}
\usepackage{amsfonts,amsmath} %,amsthm} % Les biblioth�ques LaTeX de l'American Mathematical Society : macros indispensables
\usepackage{xcolor} % Couleurs dans le texte
\usepackage{tikz}
\usepackage{tikz-qtree}
%\usetikzlibrary{shapes}
%\usetikzlibrary{automata}
%\usepackage{authblk}
\usepackage{todonotes}

\hypersetup{breaklinks,pdfdisplaydoctitle,colorlinks,linkcolor=red!50!black,citecolor=green!50!black}

\usepackage{mathrsfs}
\usepackage[linesnumbered,boxed]{algorithm2e}

%%Style du document
%\textwidth 17.5cm 		%Largeur du texte
%\textheight 24.7cm	%Hauteur du texte
%\headheight 0pt		%Hauteur de l'en-t�te
%\topmargin 0in			%D�calage en hauteur de l'en-t�te
%\headsep 1cm				%Espace entre l'en-t�te et le corps de page
%\voffset -1cm				%D�calage en hauteur de la page dans la feuille
%\hoffset -0.75in		%Recadrage en largeur de la page dans la feuille
%\oddsidemargin 30pt
%\footskip 35pt

%Commandes
%\newcommand{\RefCite}[1]{\textcolor{black}{\cite{#1}}}%Citation en exposant rouge sans option

%Environnements
%\newtheorem{theorem}{Theorem}    % Compteur num�rot� par section
%\newtheorem{definition}{Definition}
%\newtheorem{lemma}{Lemma}
%\newtheorem{corollary}{Corollary}

\longversion{}
\shortversion{}

\shortversion{\include{spacesaver}}

%Contenu du document
\begin{document}

%\title{Enumerating minimal transversals of rank-$k$ hypergraphs}
%\title{Hitting Sets Revisited: Hypergraphs of Bounded Rank}
%\title{Hitting Sets and Transversals Revisited:~Enumeration}
\title{Faster algorithms to enumerate hypergraph transversals}

\author{Manfred Cochefert\inst{1}
\and Jean-Fran\c{c}ois Couturier\inst{2} 
\and Serge Gaspers\inst{3} \inst{4}
%\and Fabrizio Grandoni%
%\thanks{IDSIA, University of Lugano, Switzerland.
% \texttt{fabrizio@idsia.ch}} 
\and \\Dieter Kratsch\inst{1}%
}

\authorrunning{M. Cochefert, J.-F. Couturier, S. Gaspers, and D. Kratsch}

\institute{ %
LITA, Universit\'e de Lorraine, Metz, France.
\email{manfred.cochefert@gmail.com}, 
\email{dieter.kratsch@univ-lorraine.fr}
\and
CReSTIC, Universit\'e de Reims, France.
\email{jean-francois.couturier@univ-reims.fr}
\and
University of New South Wales, Sydney, Australia.
\email{sergeg@cse.unsw.edu.au}
\and
NICTA, Sydney, Australia
}

\maketitle

\begin{abstract}
A transversal of a hypergraph is a set of vertices intersecting each hyperedge.
We design and analyze new exponential-time algorithms to enumerate all 
inclusion-minimal transversals of a hypergraph. %, also known as minimal hitting sets of set systems. 
For each fixed $k\ge 3$, our algorithms for hypergraphs of rank $k$, where the rank is the maximum size of a hyperedge, outperform the previous best.
This also implies improved upper bounds on the maximum number of minimal transversals in 
$n$-vertex hypergraphs of rank $k\ge 3$.
Our main algorithm is a branching algorithm whose running time is analyzed with Measure and Conquer.
It enumerates all minimal transversals of hypergraphs of rank $3$ %on $n$ vertices
in time $O(1.6755^n)$. %improving on the previous best bound of $O(1.8393^n)$.
Our algorithm for hypergraphs of rank $4$ is based on iterative compression.
%and for hypergraphs of rank $k\ge 5$, we give branching algorithms based on similar ideas as %the algorithm for hypergraphs of rank $3$.
Our enumeration algorithms improve upon the best known algorithms %\cite{FominGKLS10tcs} 
for counting minimum transversals 
%hitting sets
in hypergraphs of rank $k$ for $k\ge 3$ and for computing a minimum %hitting set 
transversal in hypergraphs of rank $k$ for $k\ge 6$. 
%
%\medskip
%\noindent
%\textbf{Keywords:} enumeration, hitting set, hypergraph transversals
\end{abstract}

\section{Introduction}  \label{s:intro}

A \emph{hypergraph} $H$ is a couple $(V,E)$, where $V$ is a set of vertices and $E$ is a set of subsets of $V$\longversion{; these subsets are} called \emph{hyperedges}. A \emph{transversal} of $H$ is a subset of vertices $S\subseteq V$ such that each hyperedge of $H$ contains at least one vertex from $S$. A transversal of $H$ is \emph{minimal} if it does not contain a transversal of $H$ as a proper subset. The \emph{rank} of a hypergraph $H$ is the maximum size of a hyperedge.
Finding, counting and enumerating (minimal) transversals\longversion{, respectively (minimal) hitting sets of set systems} fulfilling certain constraints are fundamental problems 
in Theoretical Computer Science with many important applications, for example in artificial intelligence, biology, logics, relational and distributed databases, Boolean switching theory and model-based  diagnosis. Various of those applications are described %by Eiter and Gottlob 
in Section~6 of~\cite{EiterG03}.
The notions \emph{hitting set} and transversal are synonymous, both of them name a subset of 
elements (vertices) having non empty intersection with each subset (hyperedge) of a given set system (hypergraph). We shall use both notions interchangeably usually speaking of transversals 
in hypergraphs whenever adressing enumeration, while %often 
speaking of hitting sets
when adressing optimization and counting.  
%To simplify the discussion, let us first talk on optimization problems. 
The \textsc{Minimum Hitting Set} problem
is  a well-studied NP-complete problem that, like its dual \textsc{Minimum Set Cover}, belongs
to the list of 21 problems shown to be NP-complete by Karp in 1972~\cite{Karp72}. It can be seen as an  extension of the fundamental graph problems \textsc{Minimum Dominating Set}, 
\textsc{Minimum Vertex Cover} and \textsc{Maximum Independent Set}; all of them 
also 
belonging to Karp's list~\cite{Karp72}. %For decades 
These fundamental NP-complete %optimization 
problems have been studied extensively from many algorithmic views; among them
exact, approximation and parameterized algorithms.

{\bf Prior to our work.}
%The following was known prior to our work. 
Wahlstr\"om studied \textsc{Minimum Hitting Set}
on hypergraphs of rank $3$ and achieved an $O(1.6359^n)$ time and polynomial space
algorithm 
as well as an $O(1.6278^n)$ time and exponential space algorithm~\cite{Wahlstrom04,Wahlstrom07}.
Here, $n$ denotes the number of vertices.
In an attempt to show that iterative compression
can be useful in exact exponential-time algorithms, Fomin et al. studied 
\textsc{Minimum Hitting Set} on hypergraphs of rank at most $k$ for any fixed $k\ge 2$
as well as the problem of counting minimum hitting sets and achieved best known running times
for most of these problems without improving upon Wahlstr\"om's algorithms~\cite{FominGKLS10tcs}.
(See also Table 1 in \cite{FominGKLS10tcs}.) The \textsc{Minimum Hitting Set} problem on hypergraphs of fixed 
rank $k$ has also been studied from a parameterized point of view by various authors~\cite{Fernau10dHS,Fernau10weighted,Fernau10,FominGKLS10tcs,NiedermeierR03,Wahlstrom07}. (See Table 2
in \cite{FominGKLS10tcs}.)
A well-known result of Cygan et al. states that for every $c<1$, \textsc{Minimum Hitting Set} cannot be solved by an $O(2^{cn})$ algorithm unless the 
Strong Exponential Time Hypothesis fails~\cite{CyganDLMN12}, while there is a %trivial
$O(2^n)$ algorithm based on verifying all subsets of elements.  
%This motivates us to study the exact complexity of enumerating the minimal transversals on %hypergraphs of bounded rank. 
%i.e. the size of an hyperedges is bounded by some fixed integer $k\ge 2$. 
%(Note that this problem has an incremental output-polynomial time algorithm 
%even for a larger class of hypergraphs~\cite{KhachiyanBEG07}.)
%
The only exponential-time algorithm to enumerate all minimal transversals prior to our work 
had been given by Gaspers in his Master's thesis. %ten years ago. 
It is a %simple 
branching algorithm which for hypergraphs of rank $k$ has branching vector $(1,2,\dots,k)$
since in the worst case it selects a hyperedge of size $k$ and recurses on instances with $n-1$, $n-2$, $\dots$, and $n-k$
vertices. The corresponding recurrence for $k=3$ gives a running time of $O(1.8393^n)$.

{\bf Our techniques and results.}
% SERGE: I found the next statement too strong.
% Our branching algorithms are sophisticated ones. We use elaborate measures % and ingenious Measure \& Conquer time analysis.
We use various properties of minimal transversals to design branching algorithms and analyze them
using an elaborate Measure \& Conquer analysis.
% Their running time is highly dependent on the way we select a vertex to branch on.
For an in-depth treatment of branching algorithms, their construction and analysis, branching rules, branching vectors and Measure \& Conquer we refer to \cite{FominK10}.  
For details on our approach, see \autoref{sec:measure} and \cite{Gaspers10,GaspersS12}.
Our main result is the algorithm for hypergraphs of rank $3$ in \autoref{s:rank=3}
which runs in time $O(1.6655^n)$. In \autoref{s:rank=4}  we show that the iterative compression approach from \cite{FominGKLS10tcs} can be extended to enumeration problems and obtain an 
algorithm of running time $O(1.8863^n)$ for hypergraphs of rank $4$. 
In \autoref{s:rank=5} we construct branching algorithms to enumerate the minimal transversals of hypergraphs of rank \longversion{at most }$k$, for all fixed $k\ge 5$. Our algorithmic
results combined with implied upper bounds and new combinatorial lower bounds for the maximum number of minimal transversals in $n$-vertex hypergraphs of rank $k$ are summarized 
in \autoref{tab:mhs}. Finally to underline the potential of our  enumeration algorithms 
let us mention that they can be used to improve upon the best known algorithms   
(see e.g. Table 1 in \cite{FominGKLS10tcs})
for counting minimum hitting sets in hypergraphs of rank $k\ge 3$ and for solving
\textsc{Minimum Hitting Set} in hypergraphs of rank $k\ge 6$.

\begin{table}[tb]  
	\begin{center}
\shortversion{
		\begin{tabular}{c c c c c c}
			\toprule
			Rank \qquad & Lower bound \qquad & Upper bound & Rank \qquad & Lower bound \qquad & Upper bound \\
			\midrule
			%1 & $1$ & $1$ \\
			2 & $1.4422^n$ & $1.4423^n$ \cite{MoonM65,MillerM60} & 7 & $1.7734^n$ & $O(1.9893^n)$ \\
			3 & $1.5848^n$ & $O(1.6755^n)$ & 8 & $1.7943^n$ & $O(1.9947^n)$ \\
			4 & $1.6618^n$ & $O(1.8863^n)$ & 9 & $1.8112^n$ & $O(1.9974^n)$ \\
			5 & $1.7114^n$ & $O(1.9538^n)$ & 10 & $1.8253^n$ & $O(1.9987^n)$  \\
			6 & $1.7467^n$ & $O(1.9779^n)$ & 20 & $1.8962^n$ & $O(1.9999988^n)$ \\
			\bottomrule
		\end{tabular}
}\longversion{
		\begin{tabular}{c c c}
			\toprule
			Rank \qquad & Lower bound \qquad & Upper bound  \\
			\midrule
			%1 & $1$ & $1$ \\
			2 & $1.4422^n$ & $1.4423^n$ \cite{MoonM65,MillerM60} \\
			3 & $1.5848^n$ & $O(1.6755^n)$ \\
			4 & $1.6618^n$ & $O(1.8863^n)$ \\
			5 & $1.7114^n$ & $O(1.9538^n)$ \\
			6 & $1.7467^n$ & $O(1.9779^n)$ \\
			7 & $1.7734^n$ & $O(1.9893^n)$ \\
			8 & $1.7943^n$ & $O(1.9947^n)$ \\
			9 & $1.8112^n$ & $O(1.9974^n)$ \\
			10 & $1.8253^n$ & $O(1.9987^n)$  \\
			20 & $1.8962^n$ & $O(1.9999988^n)$ \\
			\bottomrule
		\end{tabular}
}
	\end{center}
	\caption{\label{tab:mhs} Lower and upper bounds for the maximum number of minimal 
                transversals in an $n$-vertex     
                hypergraph of rank $k$. 
		%For $k=2$ the precise upper and lower bound is $3^{n/3}$ and this is a 
               %combinatorial one 
               %\cite{MoonM65,MillerM60}. Corresponding enumeration algorithms exist. All other
               %upper bounds are based on our enumeration algorithms:~\autoref{thm:r3} for $k=3$,     
              %\autoref{thm:r4} for $k=4$,
		%and \autoref{thm:r5} for $k\ge 5$.
		%The lower bounds are based on explicit constructions and follow from 
                %\autoref{thm:mhslb}.
               }
\end{table}
{\bf Other related work.}
Enumerating all minimal transversals of a hypergraph is likely to be the most studied problem 
in output-sensitive enumeration. The problem whether there is an output-polynomial
time algorithm, i.e. an algorithm with a running time bounded by a polynomial in the input size and the output size, to enumerate all minimal transversals of a hypergraph is still open
despite efforts of more than thirty years including many of the  leading researchers of 
the field. These efforts have produced many publications on the enumeration of the 
minimal transversals on special hypergraphs~\cite{EiterG03,EiterGM03,ElbassioniR10,%
FredmanK96,KavvadiasS05,KhachiyanBEG07} %, called tractable cases, 
and has also turned output-sensitive enumeration
into an active field of research. % in Theoretical Computer Science. 
Recent progress by Kant\'e et al. \cite{KanteLMN14} 
showing that an output-polynomial time algorithm to enumerate all minimal dominating sets of a graph would imply an output-polynomial time algorithm to enumerate the minimal transversals of a hypergraph has triggered a lot of research on the enumeration of minimal dominating sets, both in output-sensitive enumeration and exact exponential %-time 
enumeration; see e.g.~\cite{CouturierHHK13,GolovachHKV15,KanteLMNU15b,KanteLMNU15a}.

\section{Preliminaries}  \label{s:prelim}

%We begin with some rather standard definitions and notations and then introduce some more %specific definitions.

%A \emph{hypergraph} $H$ is a couple $(V,E)$, where $V$ is a set of vertices and $E$ is a set of %subsets of $V$; these subsets are called \emph{hyperedges}. 
We refer to the set of vertices and the set of hyperedges of a hypergraph $H=(V,E)$ 
by $V(H)$ and $E(H)$, respectively.
Throughout the paper we denote the number of vertices of a hypergraph by $n$. 
%The \emph{rank} of $H$ is the maximum size of a hyperedge. 
Note that a hypergraph of rank $k$ has $O(n^k)$ hyperedges.  
Let $v\in V$ be a vertex of $H$.
The \emph{degree} of $v$ in $H$, denoted $d_H(v)$, is the number of hyperedges in $E$ containing $v$.
We omit the subscript when $H$ is clear from the context.
The \emph{neighborhood} of $v$ in $H$, denoted $N(v)$ or $N_H(v)$, is the set of vertices (except $v$ itself) that occur together with $v$ in some hyperedge of $H$.
We denote by $d_i(v)$ or $d_{i,H}(v)$ the number of hyperedges of size $i$ in $H$ containing $v$.
We denote $d_{\le i}(v) := \sum_{j=0}^i d_j(v)$.
%A \emph{transversal} of $H$ is a subset of vertices $S\subseteq V$ such that each hyperedge %of $H$ contains at least one vertex from $S$.
%A transversal of $H$ is \emph{minimal} if it does not contain a transversal of $H$ as a proper %subset.
If the hypergraph is viewed as a set system $E$ over a ground set $V$, transversals are also called \emph{hitting sets}.
We also say that a vertex $v$ \emph{hits} a hyperedge $e$ if $v\in e$.%
\shortversion{\footnote{Due to space restrictions some proofs 
have been moved to an appendix.}}

%\bigskip
%
%\todo[inline]{%
%     More on branching technicalities in Preliminaries. 
%     Branching vectors instead of providing branching numbers via real root of characteristic     
%     polynomial ?? }

\section{Hypergraphs of rank 3}  \label{s:rank=3}
\subsection{Measure}
\label{sec:measure}

We will now introduce the measure we use to track the progress of our \longversion{branching }algorithm.
\longversion{\begin{definition}}%
A \emph{measure} $\mu$ for a problem $P$ is a function from the set of all instances for $P$ to the set of
non-negative reals.
\longversion{\end{definition}}%
%As mentioned in the introduction, 
Our measure will take into account the degrees of the vertices and the number of hyperedges of size 2.
Measures depending on vertex degrees have become relatively standard in the literature.
As for hyperedges of size 2, our measure, similar to Wahlstr{\"o}m's \cite{Wahlstrom04}, indicates an advantage when we can branch on hyperedges of size $2$ once or several times.

Let $H=(V,E)$ be a hypergraph of rank at most $3$.
Denote by $n_k$ the number of vertices of degree $k\in \mathbb{N}$.
Denote by $m_k$ the number of hyperedges of size $k\in \{0,\dots,3\}$.
Also, denote by $m_{\le k} := \sum_{i=0}^k m_i$.
The \emph{measure} of $H$ is
\begin{align*}
 \mu(H) = \Psi(m_{\le 2}) + \sum_{i=0}^{\infty} \omega_i n_i \enspace ,
\end{align*}
where $\Psi: \mathbb{N} \rightarrow \mathbb{R}_{\ge 0}$ is a non-increasing non-negative function such that $\max_i\{\Psi(i)\} = \Psi(0)$ is independent of $n$, and $\omega_i$ are non-negative reals.
The function $\Psi$ is non-increasing since we would like to model that we have an advantage when the size of a hyperedge decreases from 3 to 2.
Clearly, $\mu(H)\ge 0$.

We will now make some assumptions simplifying our analysis and introduce notations easing the description of variations in measure.
We set
\begin{align}
 \omega_i &:= \omega_5\enspace, &%&&\text{for each } i\ge 6\enspace, &
 \Psi(i)  &:= 0        &&\text{for each } i\ge 6, \text{ and}\\
 \Delta \omega_i &:= \omega_i - \omega_{i-1}\enspace, &%  &&\text{for each } i\ge 1\enspace, \text{ and} &
 \Delta \Psi(i)  &:= \Psi(i)-\Psi(i-1) &&\text{for each } i\ge 1\enspace.
\intertext{We constrain that}
 0 &\le \Delta \omega_{i+1} \le \Delta \omega_i & \text{and }
 0 &\ge \Delta \Psi(i+1) \ge \Delta \Psi(i) && \text{for each } i\ge 1\enspace. \label{c:deltas}
\end{align}
Note that, by \eqref{c:deltas}, we have that
$\omega_i - \omega_{i-k} \ge k \cdot \Delta \omega_i$ for $0 \le k\le i$.
In addition, our branching rules will add constraints on the measure.
Note that a branching rule with one branch is a \emph{reduction} rule and one with no branch is a \emph{halting} rule. % -- we speak of an elementary instance.
Denote by $T(\mu) := 2^{\mu}$ an upper bound on the number of leaves of the search trees modeling the recursive calls of the algorithm for all %hypergraphs 
$H$ with $\mu(H) \le \mu$.

Suppose a branching rule creates $k\ge 1$ branches $B[1], \dots, B[k]$, each $B[i]$ decreasing the measure by $\eta_i$.
Then, we obtain the following constraint on the measure:
\longversion{
	\begin{align}
     \sum_{i=1}^k T(\mu - \eta_i) &\le T(\mu) \enspace.\nonumber\\
     \intertext{Dividing by $2^\mu$, the constraint becomes}
     \sum_{i=1}^k 2^{- \eta_i} &\le 1 \enspace. \label{c:main}
    \end{align}
}\shortversion{
 $\sum_{i=1}^k T(\mu - \eta_i) \le T(\mu).$
 Dividing by $2^\mu$, the constraint becomes
 \begin{align}
  \sum_{i=1}^k 2^{- \eta_i} &\le 1 \enspace. \label{c:main}
 \end{align}
}
Given these constraints for all branching rules, we will determine values for $\Psi(0), \dots, \Psi(5), \omega_0, \dots, \omega_5$ so as to minimize the maximum value of $\mu(H)/|V(H)|$ taken over all rank-3 hypergraphs $H$ when $|V(H)|$ is large.
Since $\Psi(m_{\le 2})$ is a constant, this part of the measure contributes only a constant factor to the running time.
Given our assumptions on the weights, optimizing the measure amounts to solving a convex program \cite{Gaspers10,GaspersS12} minimizing $\omega_5$ subject to all constraints, which can be done with certificates
of optimality.
If\longversion{, in addition,} we make sure that the maximum recursion depth of the algorithm is polynomial\longversion{ly bounded}, we obtain that the running time is within a polynomial factor of
$2^{\omega_5}$.

\subsection{Algorithm} 
\label{sec:algo3}

An instance of a recursive call of the algorithm is a hypergraph $H=(V,E)$ with rank at most 3 and a set $S$, which is a partial hitting set for the original hypergraph.
The hypergraph $H$ contains all hyperedges that still need to be hit, and the vertices that are eligible to be added to $S$.
Thus, $V\cap S = \emptyset$.
Initially, $S = \emptyset$.
Each branching rule has a condition which is a prerequisite for applying the rule.
When the prerequisites of more than one rule hold, the first applicable rule is used.
Our branching rules create subinstances where some vertices are selected and others are discarded.
\begin{itemize}
 \item If we \emph{select} a vertex $v$, we remove all hyperedges containing $v$ from the subinstance of the branch, we add $v$ to $S$, and remove $v$ from $V$.
 \item If we \emph{discard} a vertex $v$, we remove $v$ from all hyperedges and from $V$.
\end{itemize}

\noindent
We now come to the description of the branching rules, their correctness, and their analysis, i.e., the constraints they impose on the measure.
Rules $0.x$ are halting rules, and
Rules $1.x$ are reduction rules.
Each rule first states the prerequisite, then describes
its actions, then the soundness is proved if necessary, and the constraints on the measure are given for the analysis.

\longversion{
 \newcommand{\myrule}[1]{\par\medskip \noindent \textbf{Rule #1} }
}\shortversion{
 \newcommand{\myrule}[1]{\par\smallskip \noindent \textbf{Rule #1} }
}

\myrule{0.0}
There is a hyperedge $e\in E$ of size 0.
Do nothing.
The algorithm backtracks and enumerates no transversal in this recursive call
since there remains a hyperedge that cannot be hit by a vertex.

\myrule{0.1}
$E=\emptyset$.
If $S$ is a minimal transversal for the original hypergraph, output $S$, otherwise do nothing.
We are in a recursive call where no hyperedge remains to be hit.
Thus, $S$ is a transversal of the original input hypergraph.
However, $S$ might not be minimal, and the algorithm checks in polynomial time 
if $S$ is a minimal transversal of the initial hypergraph and outputs it if so.

\myrule{1.0}
There is a vertex $v \in V$ with degree 0.
Discard $v$.
Indeed, since $v$ hits no hyperedge of $H$, a transversal for $H$ containing $v$ is not minimal.
The constraint on the measure is $2^{-\omega_0} \le 2^0$, which is trivially satisfied since $\omega_0\ge 0$.

\myrule{1.1}
There is a hyperedge $e_1 \in E$ that is a subset of another hyperedge $e_2 \in E$ of size $3$.
Remove $e_2$ from $E$.
The rule is sound since each transversal of $(V,E\setminus \{e_2\})$ is also a transversal of $H$.
This is because any vertex that hits $e_1$ also hits $e_2$.
Since $e_2$ has size $3$, this rule has no effect on the measure; the constraint $2^0 \le 2^0$ is always satisfied.

\myrule{1.2}
There is a hyperedge $e$ of size 1.
Select $v$, where $\{v\} = e$.
The rule is sound since $v$ is the only vertex that hits $e$.
Selecting $v$ removes $v$ and all hyperedges containing $v$ from the instance.
By \eqref{c:deltas}, the decrease in measure is at least $\eta_1 = \omega_{d(v)} + \Psi(d_{\le 2}(v)) - \Psi(0)$.
To fulfill \eqref{c:main}, we will need to constrain that $\eta_1 \ge 0$.
Since $d_{\le 2}(v) \le d(v)$ and by \eqref{c:deltas}, if suffices to constrain
\begin{align}
 \Psi(i)-\Psi(0) &\ge - \omega_i &&\text{for } 1\le i\le 6\enspace.
\end{align}

\myrule{2}
There is a vertex $v\in V$ with degree one.
Denote $e$ the unique hyperedge containing $v$ and branch according to the following three subrules.

\tikzstyle{vertex}=[minimum size=1mm,circle,fill=black,inner sep=0mm,draw=black]
\tikzstyle{optvertex}=[minimum size=1mm,circle,fill=black!50,inner sep=0mm,draw=black!50]
  
\begin{figure}[tbp]
 \centering
 \begin{tikzpicture}[xscale=0.8,yscale=0.7]
  \draw (0,0) node {$v$};
  \draw (1,0) node {$u$};
  \draw ([shift=(45:1cm)] 1,0) node[vertex] {};
  \draw ([shift=(45:2cm)] 1,0) node[optvertex] {};
  \draw (0.5,0) ellipse (0.8cm and 0.4cm);
  \draw[draw=black!50,rotate around={45:([shift=(45:1cm)] 1,0)}] ([shift=(45:1cm)] 1,0) ellipse (1.3cm and 0.4cm);
  \draw (1,-1.3) node {Rule 2.1};
  
 \begin{scope}[xshift=4cm]
  \draw (0,0) node {$v$};
  \draw (1,0) node {$u$};
  \draw (2,0) node {$w$};
  \draw (1,0) ellipse (1.3cm and 0.4cm);
  \draw (1,-1.3) node {Rule 2.2};
 \end{scope}
 
 \begin{scope}[xshift=8cm]
  \draw (0,0) node {$v$};
  \draw (1,0) node {$u$};
  \draw (2,0) node {$w$};
  \draw ([shift=(45:1cm)] 2,0) node[vertex] {};
  \draw ([shift=(45:2cm)] 2,0) node[optvertex] {};
  \draw (1,0) ellipse (1.3cm and 0.4cm);
  \draw[rotate around={45:([shift=(45:1cm)] 2,0)}] ([shift=(45:1cm)] 2,0) ellipse (1.3cm and 0.4cm);
  \draw[draw=black!50,rotate around={-45:([shift=(-45:0.5cm)] 1,0)}] ([shift=(-45:0.5cm)] 1,0) ellipse (0.8cm and 0.4cm);
  \draw[draw=black!50,rotate around={-45:([shift=(-45:0.5cm)] 2,0)}] ([shift=(-45:0.5cm)] 2,0) ellipse (0.8cm and 0.4cm);
  \draw (1,-1.3) node {Rule 2.3};
 \end{scope}
 \end{tikzpicture}
\end{figure}

\myrule{2.1}
The hyperedge $e$ has size $2$. Denote $e=\{v,u\}$.
Branch into two subproblems: $B[1]$ where $v$ is selected, and $B[2]$ where $v$ is discarded.
In $B[1]$, the vertex $u$ is discarded due to minimality of the transversal, and the number of hyperedges of size at most $2$ decreases by $1$ since
$e$ is removed. By \eqref{c:deltas} the decrease in measure in $B[1]$ is at least $\eta_1 = \omega_{1} + \omega_{d(u)} + \Delta\Psi(m_{\le 2})$.
In $B[2]$, the vertex $u$ is selected by applying Rule 1.2 after having discarded $v$.
We have that $d_{\le 2}(u) \le \min(d(u),m_{\le 2})$ hyperedges of size at most 2 disappear.
By \eqref{c:deltas} the decrease in measure in $B[2]$ is at least
$\eta_2 = \omega_{1} + \omega_{d(u)} + \Psi(m_{\le 2}) - \Psi(\max(m_{\le 2}-d(u),0))$.
Note that we do not take into account additional sets of size at most $2$ that may be created by discarding $u$ and the degree-decreases of the other vertices in the neighborhood of $u$ as a result of selecting $u$.
However, these do not increase the measure due to constraints \eqref{c:deltas}.
Thus, we constrain that
\begin{align}
 2^{-\omega_1 - \omega_{d(u)} - \Delta\Psi(m_{\le 2})}
 +2^{-\omega_1 - \omega_{d(u)} - \Psi(m_{\le 2}) + \Psi(\max(m_{\le 2}-d(u),0))}
 &\le 1\enspace, \label{c:21}
\end{align}
for \longversion{all $d(u)$ and $m_{\le 2}$ such that }$1\le d(u) \le 6$ and $1\le m_{\le 2}\le 6$. Note that the value of $d(u)$ ranges up to 6 instead of 5 although $\omega_5=\omega_6$, so that we have a constraint modeling that the value of $\Psi$ increases from $\Psi(6)=0$ to $\Psi(0)$ in the second branch.

In the remaining subrules of Rule 2, the hyperedge $e$ has size $3$.

\myrule{2.2}
The other two vertices in $e$ also have degree 1.
Branch into 3 subproblems adding exactly one vertex from $e$ to $S$ and discarding the other 2.
Clearly, any minimal transversal contains exactly one vertex from $e$.
The decrease in measure is $3\omega_1$ in each branch, giving the constraint
\begin{align}
 3 \cdot 2^{-3\omega_1}
 & \le 1\enspace.
\end{align}

\myrule{2.3}
Otherwise, let $e = \{v,u,w\}$ with $d(u)\ge 2$.
We create two subproblems: in $B[1]$ we select $v$ and in $B[2]$ we discard $v$.
Additionally, in $B[1]$ we discard $u$ and $w$ due to minimality.
The measure decreases by at least
$\eta_1 = 2\omega_1+\omega_2$ and
$\eta_2 = \omega_1 - \max_{i\ge 1} \{\Delta\Psi(i)\} = \omega_1$; the last equality holds since $\Psi(i)=0, i\ge 6$, and by \eqref{c:deltas}.
We obtain the constraint
\begin{align}
 2^{-2\omega_1-\omega_2}
 +2^{-\omega_1}
 \le 1\enspace.
\end{align}

\myrule{3}
At least one hyperedge has size 2.
Let $v$ be a vertex that has maximum degree among all vertices contained in a maximum number of hyperedges of size 2.
Let $e=\{v,u_1\}$ be a hyperedge containing $v$.
Note that, due to Rule 2, every vertex has degree at least 2\longversion{ if Rule 3 applies}.
We branch according to the following subrules.

\begin{figure}[tbp]
 \centering
 \begin{tikzpicture}[xscale=0.8,yscale=0.7]
  \draw (0,0) node {$v$};
  \draw (1,0) node {$u_1$};
  \draw ([shift=(45:1cm)] 0,0) node[vertex] {};
  \draw ([shift=(45:2cm)] 0,0) node[vertex] {};
  \draw ([shift=(45:1cm)] 1,0) node[vertex] {};
  \draw ([shift=(45:2cm)] 1,0) node[vertex] {};
  \draw (0.5,0) ellipse (0.8cm and 0.35cm);
  \draw[rotate around={45:([shift=(45:1cm)] 0,0)}] ([shift=(45:1cm)] 0,0) ellipse (1.3cm and 0.35cm);
  \draw[rotate around={45:([shift=(45:1cm)] 1,0)}] ([shift=(45:1cm)] 1,0) ellipse (1.3cm and 0.35cm);
  \draw (1,-1.3) node {Rule 3.1};
  
 \begin{scope}[xshift=4cm]
  \draw (0,0) node {$v$};
  \draw ([shift=(-45:1cm)] 0,0) node {$u_2$};
  \draw (1,0) node {$u_1$};
  \draw ([shift=(45:1cm)] 0,0) node[vertex] {};
  \draw ([shift=(45:2cm)] 0,0) node[vertex] {};
  \draw[draw=black!50,rotate around={45:([shift=(45:1cm)] 0,0)}] ([shift=(45:1cm)] 0,0) ellipse (1.3cm and 0.35cm);
  \draw (0.5,0) ellipse (0.8cm and 0.35cm);
  \draw[rotate around={-45:([shift=(-45:0.5cm)] 0,0)}] ([shift=(-45:0.5cm)] 0,0) ellipse (0.8cm and 0.35cm);
  \draw (1,-1.3) node {Rule 3.2};
 \end{scope}
 
 \begin{scope}[xshift=8cm]
  \draw (0,0) node {$v$};
  \draw ([shift=(-45:1cm)] 0,0) node {$u_3$};
  \draw (1,0) node {$u_2$};
  \draw ([shift=(45:1cm)] 0,0) node {$u_1$};
%  \draw ([shift=(90:1cm)] 0,0) node[vertex] {};
%  \draw ([shift=(90:2cm)] 0,0) node[optvertex] {};
  \draw[draw=black!50,rotate around={90:([shift=(90:0.5cm)] 0,0)}] ([shift=(90:0.5cm)] 0,0) ellipse (0.8cm and 0.35cm);
  \draw[rotate around={45:([shift=(45:0.5cm)] 0,0)}] ([shift=(45:0.5cm)] 0,0) ellipse (0.8cm and 0.35cm);
  \draw (0.5,0) ellipse (0.8cm and 0.35cm);
  \draw[rotate around={-45:([shift=(-45:0.5cm)] 0,0)}] ([shift=(-45:0.5cm)] 0,0) ellipse (0.8cm and 0.35cm);
  \draw (1,-1.3) node {Rule 3.3};
 \end{scope}
 \end{tikzpicture}
\end{figure}

\myrule{3.1}
$d_2(v)=1$.
We branch into two subproblems: in $B[1]$ we select $v$ and in $B[2]$ we discard $v$.
Additionally, in $B[2]$, we select $u_1$ by Rule 1.2.
Since $u_1$ is contained in the hyperedge $e$ and since $d_2(u_1) \le d_2(v)$, we have that $d_2(u_1)=1$ and therefore $d(v) \ge d(u_1)$.
% I changed the definition of v to obtain that d(v) >= d(u_1) here
In $B[1]$, we observe that the degrees of $v$'s neighbors decrease. Also, $e$ is a hyperedge of size $2$ and it is removed; this affects the value of $\Psi$. The measure decrease in the first branch is therefore at least
$\omega_{d(v)} + \Delta\omega_{d(u_1)} + \Delta\Psi(m_{\le 2}) \ge \omega_{d(u_1)} + \Delta\omega_{d(u_1)} + \Delta\Psi(m_{\le 2})$.
In $B[2]$, since we select $u_1$, we have that $d_2(u_1)=1$ hyperedge of size 2 disappears, and since we discard $v$, we have that $d(v)-1$ sets of size 3 will become sets of size 2.
Also, none of these size-2 sets already exist in $E$, otherwise Rule 1.1 would apply.
We have a measure decrease of $\omega_{d(v)} + \omega_{d(u_1)} +\Psi(m_{\le 2}) -\Psi(m_{\le 2}+d(v)-2) \ge 2\omega_{d(u_1)} +\Psi(m_{\le 2}) -\Psi(m_{\le 2}+d(u_1)-2)$.
We obtain the following set of constraints:
\begin{align}
 2^{-\omega_{d(u_1)} - \Delta\omega_{d(u_1)} - \Delta\Psi(m_{\le 2})}
 +2^{-2\omega_{d(u_1)} -\Psi(m_{\le 2}) +\Psi(m_{\le 2}+d(u_1)-2)}
 \le 1\enspace, \label{c:31}
\end{align}
for each $d(u_1)$ and $m_{\le 2}$ with $2\le d(u_1) \le 6$ and $1\le m_{\le 2}\le 6$.
Here, $d(u_1)$ ranges up to 6 since $\Delta\omega_6=0$, whereas $\Delta\omega_5$ may be larger than 0.

\myrule{3.2}
$d_2(v)=2$.
We branch into two subproblems: in $B[1]$ we select $v$ and in $B[2]$ we discard $v$.
Denoting $\{v,u_2\}$ the second hyperedge of size $2$ containing $v$,
we additionally select $u_1$ and $u_2$ in $B[2]$.
In $B[1]$, the measure decrease is at least $\eta_1 = \omega_{d(v)} + \Delta\omega_{d(u_1)} + \Delta\omega_{d(u_2)} + \Psi(m_{\le 2}) - \Psi(m_{\le 2}-2)$.
In $B[2]$, selecting $u_1$ and $u_2$ removes $d_2(u_1)+d_2(u_2)\le \min(4,m_{\le 2})$ hyperedges of size $2$, and discarding $v$ decreases the size of $d(v)-2$ hyperedges from 3 to 2.
Thus, the measure decrease is at least $\eta_2 = \omega_{d(v)} + \omega_{d(u_1)} + \omega_{d(u_2)} + \Psi(m_{\le 2}) - \Psi(\max(m_{\le 2}-4,0) +d(v)-2)$.
We obtain the following set of constraints:
\begin{align}
 &2^{-\omega_{d(v)} - \Delta\omega_{d(u_1)} - \Delta\omega_{d(u_2)} -\Psi(m_{\le 2}) + \Psi(m_{\le 2}-2)} \nonumber\\
 +\;&2^{-\omega_{d(v)} - \omega_{d(u_1)} - \omega_{d(u_2)} - \Psi(m_{\le 2}) + \Psi(\max(m_{\le 2}-4,0) +d(v)-2)}  \longversion{\nonumber\\ }
 \le\; \longversion{&} 1\enspace, \label{c:32}
\end{align}
for \longversion{each $d(v), d(u_1), d(u_2)$, and $m_{\le 2}$ with }$2\le d(v), d(u_1), d(u_2) \le 6$ and $2\le m_{\le 2}\le 6$.

\myrule{3.3}
$d_2(v) \geq 3$.
We branch into two subproblems: in $B[1]$ we select $v$ and in $B[2]$ we discard $v$.
In $B[2]$ we additionally select all vertices occurring in hyperedges of size 2 with $v$.
Denote by $\{v,u_2\}$ and $\{v,u_3\}$ a second and third hyperedge of size $2$ containing $v$.
In $B[1]$, the number of size-2 hyperedges decreases by $d_2(v)$. The measure decrease is at least $\omega_{d_2(v)} + \Delta\omega_{d(u_1)} + \Delta\omega_{d(u_2)} + \Delta\omega_{d(u_3)} + \Psi(m_{\le 2}) - \Psi(m_{\le 2}-d_2(v))$.
In $B[2]$, the number of hyperedges of size at most 2 decreases at most by $d_2(u_1)+d_2(u_2)+d_2(u_3) \le \min(d(u_1)+d(u_2)+d(u_3),m_{\le 2})$.
We obtain a measure decrease of at least $\omega_{d_2(v)} + \omega_{d(u_1)} + \omega_{d(u_2)} + \omega_{d(u_3)} + (d_2(v)-3)\cdot \omega_2 + \Psi(m_{\le 2}) - \Psi(\max(m_{\le 2}-d(u_1)-d(u_2)-d(u_3),0))$.
The family of constraints for this branching rule is therefore
\begin{align}
 &2^{-\omega_{d_2(v)} -\sum_{i=1}^3 \Delta\omega_{d(u_i)} - \Psi(m_{\le 2}) + \Psi(m_{\le 2}-d_2(v))}\nonumber\\
 +\;&2^{-\omega_{d_2(v)} -\sum_{i=1}^3 \omega_{d(u_i)} - (d_2(v)-3)\cdot\omega_2 -\Psi(m_{\le 2}) + \Psi(\max(m_{\le 2}-\sum_{i=1}^3 d(u_i),0))} \longversion{\nonumber\\}
 \le\; \longversion{&} 1\enspace,
\end{align}
where $3\le d_2(v) \le m_{\le 2} \le 6$ and $2\le d(u_1),d(u_2),d(u_3)\le 6$.

\myrule{4}
Otherwise, all hyperedges have size 3.
Choose $v\in V$ with maximum degree, and branch according to the following subrules.

\begin{figure}[tbp]
 \centering
 \begin{tikzpicture}[xscale=0.8,yscale=0.7]
  \draw (0,0) node {$v$};
  \draw (1,0) node[vertex] {};
  \draw (2,0) node[vertex] {};
  \draw ([shift=(90:1cm)] 0,0) node[vertex] {};
  \draw ([shift=(90:2cm)] 0,0) node[vertex] {};
  \draw ([shift=(45:1cm)] 0,0) node[vertex] {};
  \draw ([shift=(45:2cm)] 0,0) node[vertex] {};
  \draw ([shift=(-45:1cm)] 0,0) node[vertex] {};
  \draw ([shift=(-45:2cm)] 0,0) node[vertex] {};
  \draw[draw=black!50,rotate around={90:([shift=(90:1cm)] 0,0)}] ([shift=(90:1cm)] 0,0) ellipse (1.3cm and 0.35cm);
  \draw[rotate around={45:([shift=(45:1cm)] 0,0)}] ([shift=(45:1cm)] 0,0) ellipse (1.3cm and 0.35cm);
  \draw (1,0) ellipse (1.3cm and 0.35cm);
  \draw[rotate around={-45:([shift=(-45:1cm)] 0,0)}] ([shift=(-45:1cm)] 0,0) ellipse (1.3cm and 0.35cm);
  \draw (1,-1.9) node {Rule 4.1};
  
 \begin{scope}[xshift=4cm]
  \draw (0,0) node {$v$};
  \draw (1,0) node {$u$};
  \draw ([shift=(45:1cm)] 1,0) node[vertex] {};
  \draw ([shift=(-45:1cm)] 1,0) node[vertex] {};
  \draw[rotate around={22.5:([shift=(22.5:1cm)] 0,0)}] ([shift=(22.5:1cm)] 0,0) ellipse (1.3cm and 0.6cm);
  \draw[rotate around={-22.5:([shift=(-22.5:1cm)] 0,0)}] ([shift=(-22.5:1cm)] 0,0) ellipse (1.3cm and 0.6cm);
  \draw (1,-1.9) node {Rule 4.2};
 \end{scope}
 
 \begin{scope}[xshift=8cm]
  \draw (0,0) node {$v$};
  \draw ([shift=(30:1cm)] 0,0) node {$u_1$};
  \draw ([shift=(30:2cm)] 0,0) node {$w_1$};
  \draw ([shift=(-30:1cm)] 0,0) node {$u_2$};
  \draw ([shift=(-30:2cm)] 0,0) node {$w_2$};
  \draw[rotate around={30:([shift=(30:1cm)] 0,0)}] ([shift=(30:1cm)] 0,0) ellipse (1.3cm and 0.45cm);
  \draw[rotate around={-30:([shift=(-30:1cm)] 0,0)}] ([shift=(-30:1cm)] 0,0) ellipse (1.3cm and 0.45cm);
  \draw (1,-1.9) node {Rule 4.3};
 \end{scope}
 \end{tikzpicture}
\end{figure}

\myrule{4.1}
$d(v) \geq 3$.
We branch into two subproblems: in $B[1]$ we select $v$ and in $B[2]$ we discard $v$.
In $B[1]$, the degree of each of $v$'s neighbors decreases by the number of hyperedges it shares with $v$.
We obtain a measure decrease of at least
$\omega_{d(v)} + \sum_{u\in N(v)} (\omega_{d(u)} - \omega_{d(u)-|\{e\in E: \{u,v\} \subseteq e|})$, which, by  \eqref{c:deltas}, is at least $\omega_{d(v)} + 2 \cdot d(v) \cdot \Delta\omega_{d(v)}$.
In $B[2]$, the number of hyperedges of size 2 increases from 0 to $d(v)$, for a measure decrease of at least $\omega_{d(v)} + \Psi(0) - \Psi(d(v))$.
For $d(v)\in \{3,\dots,6\}$, this branching rules gives the constraint
\begin{align}
2^{-\omega_{d(v)} - 2 \cdot d(v) \cdot \Delta\omega_{d(v)}}
+2^{-\omega_{d(v)} - \Psi(0) + \Psi(d(v))}
\le 1\enspace. \label{c:41}
\end{align}

\myrule{4.2}
$d(v)=2$ and there is a vertex $u\in V\setminus \{v\}$ which shares two hyperedges with $v$.
We branch into two subproblems: in $B[1]$ we select $v$ and in $B[2]$ we discard $v$.
Additionally, we discard $u$ in $B[1]$ due to minimality.
Since each vertex has degree 2, we obtain the following constraint.
\begin{align}
 2^{-2\omega_{2} - 2 \Delta\omega_{2}}
 +2^{-\omega_{2} - \Psi(0) + \Psi(2)}
 \le 1\enspace.
\end{align}

\myrule{4.3}
Otherwise, $d(v)=2$ and for every two distinct $e_1,e_2 \in E$ with $v\in e_1$ and $v\in e_2$ we have that $e_1 \cap e_2 = \{v\}$.
Denoting $e_1=\{v,u_1,w_1\}$ and $e_2=\{v,u_2,w_2\}$ the two hyperedges containing $v$, we branch into three subproblems: in $B[1]$ we select $v$ and $u_1$; in $B[2]$ we select $v$ and discard $u_1$; and in $B[3]$ we discard $v$.
Additionally, we discard $u_2$ and $w_2$ in $B[1]$ due to minimality.
Again, all vertices have degree $2$ in this case.
In $B[1]$, the degree of $w_1$ decreases to 1. Among the two hyperedges containing $u_2$ and $w_2$ besides $e_2$, at most one is hit by $u_1$, since $d(u_1)=2$, and none of them is hit by $v$; thus, the branch creates at least one hyperedge of size at most 2. The measure decrease is at least $4\omega_2+\Delta\omega_2-\Delta\Psi(1)$.
In $B[2]$, the degrees of $w_1, u_2$, and $w_2$ decrease by 1 and the size of a hyperedge containing $u_1$ decreases.
The measure decrease is at least $2\omega_2+3\Delta\omega_2-\Delta\Psi(1)$.
In $B[3]$, two size-2 hyperedges are created for a measure decrease of $\omega_2-\Delta\Psi(2)$.
This gives us the following constraint:
\begin{align}
 2^{-4\omega_{2} - \Delta\omega_{2} + \Delta\Psi(1)}
 +2^{-2\omega_{2} - 3\Delta\omega_2 + \Delta\Psi(1)}
 +2^{-\omega_{2} + \Delta\Psi(2)}
 \le 1\enspace.
\end{align}

\noindent
This finishes the description of the algorithm.
We \shortversion{can now}\longversion{are now ready to} prove an upper bound on its running time, along the lines described in \autoref{sec:measure}\longversion{, using the following lemma}.

\begin{lemma}[\cite{GaspersS12}, Lemma 2.5 in \cite{Gaspers10}] \label{lem:measureanalysis}
Let $A$ be an algorithm for a problem $P$,
$c \ge 0$ be a constant, and $\mu(\cdot), \eta(\cdot)$ be measures
for the instances of $P$,
such that for any input instance $I$,
Algorithm $A$ transforms $I$ into instances $I_1,\ldots,I_k$,
solves these recursively, and combines their solutions to solve~$I$,
using time $O(\eta(I)^{c})$ for the transformation and combination steps
(but not the recursive solves), and
\begin{align}
(\forall i) \quad \eta(I_i) & \leq \eta(I)-1\enspace\text{, and}  \label{eq:masize}
  \\
\sum_{i=1}^k 2^{\mu(I_i)} & \leq 2^{\mu(I)}\enspace. \label{eq:magain}
\end{align}
Then $A$ solves any instance $I$
in time $O(\eta(I)^{c+1}) 2^{\mu(I)}$.
\end{lemma}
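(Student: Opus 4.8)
The plan is to analyze the recursion tree $\mathcal{T}$ of $A$ on input $I$, exploiting the two measures for two different purposes: $\mu$ will control how many nodes can appear on any single level of $\mathcal{T}$, while $\eta$ will bound the number of levels. Multiplying these two bounds together with the per-node work $O(\eta(I)^c)$ then yields the claimed running time.

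First I would set up the tree: each node corresponds to one recursive call, the children of the node for an instance $J$ are the instances $J_1,\dots,J_k$ into which $J$ is transformed, and each node carries a cost $O(\eta(\cdot)^c)$ for its transformation and combination steps. Since $\eta(J_i)\le \eta(J)-1$ along every edge by \eqref{eq:masize} and all measures are non-negative, a node at depth $d$ has $\eta$-value at most $\eta(I)-d\ge 0$; hence $\mathcal{T}$ has depth at most $\eta(I)$, i.e. at most $\lfloor\eta(I)\rfloor+1$ levels.

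The key step is a level-by-level bound on the $\mu$-weight. For each depth $d$ I define $W_d := \sum_{J \text{ at depth } d} 2^{\mu(J)}$ and show $W_{d+1}\le W_d$: the nodes at depth $d+1$ are exactly the children of the nodes at depth $d$, so grouping them by their parent and applying \eqref{eq:magain} to each parent gives $W_{d+1}=\sum_{J \text{ at depth } d}\sum_i 2^{\mu(J_i)} \le \sum_{J \text{ at depth } d} 2^{\mu(J)} = W_d$ (a leaf at depth $d$ simply contributes an empty inner sum). Since $W_0 = 2^{\mu(I)}$, it follows that $W_d\le 2^{\mu(I)}$ for every $d$. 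Because $\mu\ge 0$ forces each summand $2^{\mu(J)}\ge 1$, the number of nodes on level $d$ is at most $W_d\le 2^{\mu(I)}$.

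Finally I would combine the pieces. Summing the number of nodes over the at most $\eta(I)+1$ levels bounds the total number of nodes by $(\eta(I)+1)\,2^{\mu(I)}$. Every node $J$ satisfies $\eta(J)\le\eta(I)$, so its work is $O(\eta(I)^c)$, and multiplying gives total time $O\bigl((\eta(I)+1)\,\eta(I)^c\bigr)\,2^{\mu(I)} = O(\eta(I)^{c+1})\,2^{\mu(I)}$. The main obstacle to anticipate is that one cannot simply bound the running time by (number of leaves)$\times$(work per node): reduction and halting rules create nodes with one or zero children, so long chains of single-child nodes may occur and the leaf count alone does not control the number of internal nodes. The per-level weighting $W_d$ is precisely what sidesteps this, as it bounds every level uniformly regardless of branching degree; the only remaining care is the degenerate regime of very small $\eta(I)$, where one absorbs constants by reading $\eta(I)$ as $\eta(I)+1$ inside the $O(\cdot)$.
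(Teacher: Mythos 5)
Your proof is correct. The paper itself does not include a proof of this lemma (it is imported from the cited references \cite{GaspersS12,Gaspers10}), but your argument --- bounding the recursion depth by $\eta(I)$ via \eqref{eq:masize}, showing the per-level weight $\sum 2^{\mu(\cdot)}$ is non-increasing via \eqref{eq:magain}, and using $\mu \ge 0$ to convert that weight into a per-level node count of at most $2^{\mu(I)}$, which in particular handles chains of single-child reduction nodes that a leaf-counting argument would miss --- is essentially the same recursion-tree, level-counting proof given in those references.
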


\begin{theorem}\label{thm:r3}
The described algorithm enumerates all minimal transversals of an $n$-vertex hypergraph of
rank $3$ in time $O(1.6755^n)$.
\end{theorem}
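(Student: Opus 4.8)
The plan is to derive the theorem from two ingredients already assembled above: the rule-by-rule soundness arguments, which give correctness, and \autoref{lem:measureanalysis} applied to the measure $\mu$ of \autoref{sec:measure}, which gives the running time. First I would settle correctness. Each halting and reduction rule is sound as argued in its description, and each branching rule performs a case distinction that is exhaustive over the minimal transversals of the original hypergraph: whenever we branch on a vertex $v$, every minimal transversal either contains $v$ (the ``select'' branch) or does not (the ``discard'' branch), and the forced selections (via Rule 1.2) and the minimality-driven discards performed inside a branch are exactly the ones justified in that rule. By induction on the recursion depth, every minimal transversal of the input thus reaches a leaf with $E=\emptyset$ where Rule 0.1 fires; since the sibling branches are disjoint (membership of $v$ is fixed oppositely) no transversal is produced twice, and since Rule 0.1 emits $S$ only after a polynomial-time minimality check, the algorithm outputs precisely the minimal transversals of the input.

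For the running time I would instantiate \autoref{lem:measureanalysis} with $\mu$ as the main measure and with the auxiliary measure $\eta(H) := |V(H)| + |E(H)|$. As a rank-$3$ hypergraph has $|E(H)| = O(n^3)$, the quantity $\eta(H)=O(n^3)$ is polynomially bounded, and the per-node transformation and combination work (selecting a maximum-degree vertex, the minimality check, updating the instance) is $n^{O(1)} = O(\eta(H)^c)$ for some constant $c$. Every branch either selects or discards at least one vertex and hence removes it from $V$, and the only rule leaving $V$ untouched, Rule 1.1, deletes a hyperedge; therefore $\eta$ decreases by at least one in every branch, which is exactly condition \eqref{eq:masize}. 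Condition \eqref{eq:magain} is, after dividing by $2^\mu$, precisely the collection of inequalities \eqref{c:main} recorded in the analyses of Rules 1.0 through 4.3.

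It then remains to exhibit weights $\Psi(0),\dots,\Psi(5)$ and $\omega_0,\dots,\omega_5$ satisfying the monotonicity constraints \eqref{c:deltas}, the Rule 1.2 constraint, and the branching constraints \eqref{c:21}, \eqref{c:31}, \eqref{c:32}, \eqref{c:41} together with those of Rules 2.2, 2.3, 3.3, 4.2, and 4.3, with $\omega_5$ as small as possible. Since the parameters $d(\cdot)$ and $m_{\le 2}$ range over the finite set $\{1,\dots,6\}$, these are finitely many constraints, and (as noted in \autoref{sec:measure}) each is convex in the weights because $\eta_i$ is affine in them; minimizing the linear objective $\omega_5$ over this convex feasible region is a convex program solvable with a certificate of optimality. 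I would record the optimal weights and verify that they give $\omega_5 \le \log_2 1.6755$. Granting such weights, we have $\omega_i \le \omega_5$ for all $i$ by \eqref{c:deltas} and $\Psi(m_{\le 2}) \le \Psi(0)$ with $\Psi(0)$ independent of $n$, so $\mu(H) \le \omega_5 n + \Psi(0)$ and $2^{\mu(H)} = O(1.6755^n)$. \autoref{lem:measureanalysis} then yields running time $O(\eta(H)^{c+1})\,2^{\mu(H)} = n^{O(1)}\cdot O(1.6755^n) = O(1.6755^n)$, as claimed.

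The hard part is the last paragraph: it is not a priori clear that the entire system is feasible at a value $\omega_5$ with $2^{\omega_5} \le 1.6755$, because Rules 2.1, 3.1, 3.2, and 3.3 each impose whole families of exponential inequalities that couple the vertex weights $\omega_i$ to the size-$2$-hyperedge weights $\Psi(i)$, and one must balance these so that the true bottleneck — most plausibly the size-$2$ branchings governed by Rule 3 — does not push $\omega_5$ past the target. Producing the optimal weights and certifying their optimality is the genuine computational content of the proof; the correctness and the surrounding application of the Measure \& Conquer framework are routine.
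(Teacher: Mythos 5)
Your framework is exactly the paper's: the same correctness argument (in fact, spelled out in more detail than the paper bothers with), the same application of \autoref{lem:measureanalysis}, and even the same auxiliary measure $\eta(H)=|V|+|E|$ that the paper uses to bound the recursion depth. But there is a genuine gap, and you have named it yourself in your final paragraph: you never exhibit weights satisfying the constraint system, and therefore never establish that the system is feasible at a value $\omega_5$ with $2^{\omega_5}\le 1.6755$. The entire content of the theorem is this numeric constant; every other step in your write-up (soundness of the rules, conditions \eqref{eq:masize} and \eqref{eq:magain}, convexity of the program) holds for \emph{any} feasible weight assignment and by itself proves only that \emph{some} bound $O(2^{\omega_5 n})$ exists. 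A proof that says ``I would record the optimal weights and verify'' has deferred precisely the step that distinguishes $O(1.6755^n)$ from, say, the trivial-by-comparison $O(1.8393^n)$.

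The paper closes this gap by simply stating an explicit feasible assignment, namely $\omega_0=0$, $\omega_1=0.580392137$, $\omega_2=0.699175718$, $\omega_3=0.730706814$, $\omega_4=0.742114220$, $\omega_5=\omega_6=0.744541491$, and $\Psi(0)=0.566096928$, $\Psi(1)=0.436314617$, $\Psi(2)=0.306532603$, $\Psi(3)=0.211986294$, $\Psi(4)=0.119795899$, $\Psi(5)=0.035202514$, $\Psi(6)=0$, for which all constraints \eqref{c:deltas}--\eqref{c:41} can be checked (finitely many numeric inequalities) and which gives $2^{\omega_5}=1.6754\ldots<1.6755$. Note also that your intuition about the bottleneck is only partially right: the paper reports that the tight constraints include not only the Rule~3 families \eqref{c:31} and \eqref{c:32} but also \eqref{c:21} and \eqref{c:41} for all values of $d(v)$, so the optimum genuinely balances the size-$2$ branchings against the all-size-$3$ branchings of Rule~4.1; this balance is exactly what cannot be guessed without solving the program, which is why the explicit certificate is indispensable to the proof.
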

\begin{proof}
Consider any input instance $H=(V,E)$ with $n$ vertices and measure $\mu = \mu(H)$.
Using the following weights, the measure $\mu$ satisfies all constraints.
\begin{center}
\begin{tabular}[t]{l l l}
$i$ \hspace{0.5cm} & $\omega_i$  \hspace{2cm} & $\Psi(i)$\\
\hline
$0$ & $0$ & $0.566096928$\\
$1$ & $0.580392137$ & $0.436314617$\\
$2$ & $0.699175718$ & $0.306532603$\\
$3$ & $0.730706814$ & $0.211986294$\\
\end{tabular}\hspace{0.5cm}
\begin{tabular}[t]{l l l}
$i$ \hspace{0.5cm} & $\omega_i$  \hspace{2cm} & $\Psi(i)$\\
\hline
$4$ & $0.742114220$ & $0.119795899$\\
$5$ & $0.744541491$ & $0.035202514$\\
$6$ & $0.744541491$ & $0$\\
\end{tabular}
\end{center}
Also, $\mu \le \omega_5 \cdot n + \Psi(0)$.
Therefore, the number of times a halting rule is executed (i.e., the number of leaves of the search tree) is at most $2^{\omega_5 \cdot n + O(1)}$.
Since each recursive call of the algorithm decreases the measure $\eta(H):=|V|+|E|$ by at least $1$, the height of the search tree is polynomial.
We conclude, by \autoref{lem:measureanalysis}, that the algorithm has running time $O(1.6755^n)$ since $2^{\omega_5} = 1.6754..$.
\qed
\end{proof}

\noindent
In the proof of \autoref{thm:r3}, the tight constraints are
\eqref{c:21} with $(d(u),m_{\le 2})\in \{(5,5),(6,5),(6,6)\}$,
\eqref{c:31} with $d(u_1)=2$ and $m_{\le 2}=1$,
\eqref{c:32} with $d(v)=2$, $d(u_1)=d(u_2)=6$, and $2\le m_{\le 2}\le 4$, and
\eqref{c:41} for all values of $d(v)$.

%\todo[inline]{% Add tight recurrences.}
% DONE

\section{Hypergraphs of rank 4}  \label{s:rank=4}

For hypergraphs of rank $4$, we adapt an iterative compression algorithm of \cite{FominGKLS10tcs}, which was designed for counting the number of %smallest
minimum transversals, to the enumeration setting.

\begin{theorem}\label{thm:ic}  
	Suppose there is an algorithm with running time $O^*((a_{k-1})^n)$,
	$1 < a_{k-1} \le 2$,
	enumerating all minimal transversals in rank-$(k-1)$ hypergraphs.
	Then all minimal transversals in a rank-$k$ hypergraph can be enumerated in time
	\[
	\min_{0.5 \le \alpha \le 1} \max \left\{ O^*\left({n \choose \alpha n}\right),
	O^*\left(2^{\alpha n} (a_{k-1})^{n-\alpha n} \right) \right\}.
	\]
\end{theorem}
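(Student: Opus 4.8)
The plan is to combine a single rank-reducing branching step with a case distinction on the size of the minimal transversals, mirroring the counting algorithm of \cite{FominGKLS10tcs} but arranged so that every minimal transversal is \emph{produced} rather than merely counted. The basic building block is a reduction from rank $k$ to rank $k-1$: given any transversal $X$ of $H$, I branch over all $2^{|X|}$ subsets $Y\subseteq X$, and in the branch for $Y$ I select the vertices of $Y$ and discard the vertices of $X\setminus Y$. Since $X$ meets every hyperedge, in each branch every hyperedge is either removed (it meets $Y$) or loses at least one vertex (the vertex of $X\setminus Y$ it contained), so the residual hypergraph $H_Y$ on $V\setminus X$ has rank at most $k-1$. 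I then invoke the assumed $O^*((a_{k-1})^{|V\setminus X|})$ algorithm on $H_Y$, form $Y\cup T'$ for each minimal transversal $T'$ it outputs, and keep those unions that pass a polynomial-time minimality test (as in Rule~0.1). The cost of this branching is $O^*(2^{|X|}(a_{k-1})^{n-|X|})$, which, because $a_{k-1}\le 2$, is increasing in $|X|$.

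The key correctness claim is that this branching is complete and essentially non-redundant. For a minimal transversal $T$ of $H$, put $Y:=T\cap X$ and $T':=T\setminus X$. A short argument shows that $T'$ is a minimal transversal of $H_Y$: it hits every surviving hyperedge because the vertex of $T$ hitting such a hyperedge cannot lie in $X$ (otherwise it would lie in $Y$, contradicting that the hyperedge survived), and removing any vertex from $T'$ would, together with $Y$, leave a hyperedge of $H$ unhit, contradicting minimality of $T$. Hence $T$ is produced in the branch $Y=T\cap X$, and since the splitting of $T$ into $T\cap X$ and $T\setminus X$ is unique, $T$ arises in exactly one branch; the minimality filter discards the non-minimal unions. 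The number of candidate unions is bounded by the running time of the rank-$(k-1)$ calls, so the polynomial per-candidate overhead is absorbed into $O^*$.

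It remains to guarantee a \emph{small} transversal to branch on, and to handle instances having no small transversal; this is where $\alpha$ and iterative compression enter. Fix $\alpha$ to be the minimizer in the statement and let $\tau$ be the minimum transversal size. First I would run the standard iterative compression over the prefixes $V_i=\{v_1,\dots,v_i\}$, maintaining a minimum transversal $T_i$ of $H[V_i]$ and compressing $T_{i-1}\cup\{v_i\}$ by the reduction above. Since $T^\star\cap V_i$ is a transversal of $H[V_i]$ for any minimum transversal $T^\star$ of $H$, we have $|T_i|\le\tau$, so as long as the running size stays below $\alpha n$ every compression is invoked with $|Z|\le\alpha n+1$; the whole compression therefore costs $O^*(2^{\alpha n}(a_{k-1})^{(1-\alpha)n})$. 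If it returns a minimum transversal $X=T_n$ with $\tau\le\alpha n$, I apply the branching block to $X$; because $2^{|X|}(a_{k-1})^{n-|X|}$ is increasing in $|X|$ and $|X|=\tau\le\alpha n$, this enumerates \emph{all} minimal transversals of $H$ within $O^*(2^{\alpha n}(a_{k-1})^{(1-\alpha)n})$.

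If instead the compression ever exceeds size $\alpha n$, then $\tau>\alpha n$, so every minimal transversal $T$ satisfies $|T|>\alpha n$ and its complement $V\setminus T$ is an inclusion-maximal hyperedge-free set of size less than $(1-\alpha)n\le n/2$. In this regime I would simply enumerate all subsets $I\subseteq V$ with $|I|\le(1-\alpha)n$, test each in polynomial time for being a maximal hyperedge-free set, and output $V\setminus I$ when it is; by the symmetry of binomial coefficients this costs $O^*\big(\binom{n}{(1-\alpha)n}\big)=O^*\big(\binom{n}{\alpha n}\big)$. In either case the running time is at most $\max\{O^*(\binom{n}{\alpha n}),\,O^*(2^{\alpha n}(a_{k-1})^{(1-\alpha)n})\}$, and optimizing over $\alpha$ yields the claimed bound. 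The main obstacle I anticipate is the bookkeeping of the branching enumeration: establishing the correspondence $T\leftrightarrow(T\cap X,\,T\setminus X)$ cleanly, ruling out that a non-minimal union is emitted or that some minimal transversal is missed, and checking that the minimality filter plus deduplication keep the overhead polynomial per output; the iterative-compression accounting (staying within the $2^{\alpha n}$ budget at every prefix) is a secondary technical point.
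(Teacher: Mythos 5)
Your proposal is correct, and its core mechanism is the same as the paper's: branch over all $2^{|X|}$ subsets $N$ of a known transversal $X$, observe that discarding $X\setminus N$ leaves a residual hypergraph on $V\setminus X$ of rank at most $k-1$ (every hyperedge meets $X$ and so loses at least one vertex), run the assumed enumeration algorithm there, and keep the unions $N\cup T'$ that pass a polynomial-time minimality test; your completeness argument (each minimal transversal $T$ appears exactly once, in the branch $N=T\cap X$) matches the paper's. Where you genuinely diverge is in how $X$ is found. The paper runs no prefix-by-prefix iterative compression at all: it simply executes the brute-force phase over all subsets of size at least $\lfloor\alpha n\rfloor$ first, which both outputs all large minimal transversals and, as a byproduct, either certifies that no transversal of size $\lfloor\alpha n\rfloor$ exists (so all minimal transversals are large and already output) or exhibits a transversal $X$ of size exactly $\lfloor\alpha n\rfloor$ for a single compression step. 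You instead maintain a minimum transversal $T_i$ of the prefix hypergraph $H[V_i]$, compress $T_{i-1}\cup\{v_i\}$ at each step, and abort to the brute-force (complement) enumeration as soon as the size exceeds $\alpha n$; this is closer to the counting algorithm of Fomin et al.\ that the paper adapts, and your accounting is sound --- $|Z|\le\alpha n+1$ at every invoked compression because the minimum transversal size of a prefix never exceeds $\tau$, the function $2^{x}(a_{k-1})^{n-x}$ is nondecreasing in $x$ since $a_{k-1}\le 2$, and the fallback over complements of size at most $(1-\alpha)n\le n/2$ costs $O^*\bigl(\binom{n}{\alpha n}\bigr)$. The paper's shortcut buys simplicity: one compression call, no invariant to maintain across prefixes, and no need to extract minimum solutions from the residual enumerations. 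Your route buys adaptivity: when $\tau\le\alpha n$ you skip the $\binom{n}{\alpha n}$ phase entirely and compress against a transversal of size $\tau$ rather than $\lfloor\alpha n\rfloor$; in the worst case both meet the stated bound. Incidentally, your filter ``test $N\cup T'$ for minimality'' is stated correctly; the paper's text has a slip at this point, writing $X\cup Y$ where $N\cup Y$ is intended.
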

\newcommand{\proofthmic}{%
\begin{proof}
	Let $H=(V,E)$ be a rank-$k$ hypergraph.
	First the algorithm tries  all subsets of $V$ of size at least
	$\left\lfloor  \alpha n \right\rfloor$
	and outputs those that are minimal transversals of $H$.
	The running time of this phase is $O^*\left(\sum_{i=\left\lfloor \alpha n \right\rfloor}^{n} {n \choose i}\right)
	=O^*\left({n \choose \alpha n}\right)$.
	
	Now there are two cases. In the first case, there is no transversal of size $\left\lfloor  \alpha n \right\rfloor$.
	But then, $H$ has no transversals of size at most $\left\lfloor \alpha n \right\rfloor$ and we are done.
	
	In the second case, there exists a transversal $X$ of size $\lfloor\alpha n \rfloor$.
	For each subset $N\subseteq X$, the algorithm will enumerate all minimal transversals of $H$ whose intersection with $X$ is $N$.
	For a given subset $N\subseteq X$, obtain a new hypergraph $H'=(V',E')$ from $H$ by removing the hyperedges that contain a vertex from $N$, removing the vertices $X\setminus N$ from all remaining hyperedges, and setting $V'=V\setminus X$.
	Observe that for every minimal transversal $X'$ of $H$ with $X'\cap X=N$, we have that $X'\setminus N$ is a minimal transversal of $H'$.
	Moreover, observe that the rank of $H'$ is at most $k-1$, since all vertices from $X$ have been removed and they form a transversal of $H$.
	Therefore, the algorithm invokes an algorithm for enumerating all minimal transversals in rank-$(k-1)$ hypergraphs and runs it on $H'$.
	For each minimal transversal $Y$ that is output, check whether $X\cup Y$ is a minimal
	transversal of $H$, and if so, output $X\cup Y$.
	This phase of the algorithm has running time 
	%D must be a typo
	% $O^*\left(2^{\alpha n} (a_{d-1})^{n-\alpha n} \right)$.
	$O^*\left(2^{\alpha n} (a_{k-1})^{n-\alpha n} \right)$.
	\qed
\end{proof}	
}
\longversion{\proofthmic}

\noindent
\shortversion{Combined with \autoref{thm:r3}}\longversion{Combining \autoref{thm:ic} with \autoref{thm:r3}}, the running time is minimized for $\alpha \approx 0.66938$.

\begin{theorem}\label{thm:r4}
	The described algorithm enumerates all minimal transversals of an $n$-vertex 
hypergraph of rank $4$ in time $O(1.8863^n)$. 
\end{theorem}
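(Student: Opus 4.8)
The plan is to instantiate the generic iterative-compression bound of \autoref{thm:ic} at $k=4$, feeding in the rank-$3$ enumeration algorithm of \autoref{thm:r3} as the rank-$(k-1)$ subroutine. Since \autoref{thm:r3} runs in time $O(1.6755^n)$, we may take $a_{3}=1.6755$, which satisfies the hypothesis $1 < a_{3}\le 2$. \autoref{thm:ic} then bounds the running time for rank $4$ by
\[
\min_{0.5\le\alpha\le1}\max\left\{O^*\left({n \choose \alpha n}\right),\ O^*\left(2^{\alpha n}(1.6755)^{(1-\alpha)n}\right)\right\},
\]
so the entire content of the proof is the choice of $\alpha$ and the estimation of each term.

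First I would replace the binomial coefficient by its entropy estimate: by Stirling's approximation, ${n \choose \alpha n}=2^{H(\alpha)n+o(n)}$, where $H(\alpha)=-\alpha\log_2\alpha-(1-\alpha)\log_2(1-\alpha)$ is the binary entropy function. The second term equals $2^{(\alpha+(1-\alpha)\log_2 1.6755)n}$. Writing $\beta=\log_2 1.6755\approx 0.7445$, the two competing exponents are $H(\alpha)$ and $\beta+\alpha(1-\beta)$.

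Next I would exploit the monotonicity that makes the optimization transparent. On $[0.5,1]$ the entropy $H(\alpha)$ is strictly decreasing (from $1$ down to $0$), while the second exponent $\beta+\alpha(1-\beta)$ is strictly increasing because $\beta<1$. Hence the inner maximum, being the pointwise maximum of a decreasing and an increasing function, is minimized at the unique $\alpha^*$ where the two exponents coincide, i.e.\ at the solution of
\[
H(\alpha)=\alpha+(1-\alpha)\log_2 1.6755.
\]
Solving this transcendental equation numerically yields $\alpha^*\approx 0.66938$, as announced just before the theorem; this value indeed lies in the feasible range $[0.5,1]$ required by \autoref{thm:ic}. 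Substituting $\alpha^*$ back gives a common exponent of roughly $0.916$, so both terms are $O(2^{0.916n})=O(1.8863^n)$. Since the rounded base $1.8863$ strictly exceeds the true exponential base at $\alpha^*$, the hidden polynomial factors of the $O^*$ notation are absorbed, giving the stated $O(1.8863^n)$ bound.

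The only genuinely non-routine ingredient is the monotonicity argument guaranteeing that the balance point $\alpha^*$ is the true minimizer of the maximum; everything else — the Stirling estimate, numerically solving for $\alpha^*$, and verifying $2^{0.916}<1.8863$ — is a routine calculation. I therefore expect the balancing/feasibility check to be the conceptual crux, while the arithmetic is mechanical.
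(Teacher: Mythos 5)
Your proposal is correct and follows exactly the paper's own route: the paper proves \autoref{thm:r4} simply by instantiating \autoref{thm:ic} with the rank-$3$ algorithm of \autoref{thm:r3} (so $a_3=1.6755$) and balancing the two terms at $\alpha\approx 0.66938$, which is precisely your calculation. Your added details (the entropy estimate for the binomial, the monotonicity argument showing the min-max is attained where the exponents coincide, and the numerical check that the common exponent is about $0.9156$, i.e.\ base below $1.8863$) are the routine steps the paper leaves implicit.
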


\section{Hypergraphs of rank at least 5}  \label{s:rank=5}

For a hypergraph $H=(V,E)$ of rank $k\ge 5$, we use the following algorithm to enumerate all minimal transversals.
As in \autoref{sec:algo3}, the instance of a recursive call of the algorithm is a hypergraph $H=(V,E)$ and set $S$ which is a partial transversal of the original hypergraph. 
The hypergraph $H$ contains all hyperedges that still need to be hit and the vertices that can be added to $S$. %to form a minimal transversal of the initial hypergraph.
The algorithm  enumerates all minimal transversals $Y$ of the original hypergraph such that $Y\setminus S$ is a minimal transversal of $H$.
\begin{itemize}
\item[H1] If $E=\emptyset$, then check whether $S$ is a minimal transversal of the original hypergraph and output $S$ if so.
\item[H2] If $\emptyset\in E$, then $H$ contains an empty hyperedge, and the algorithm backtracks.
\item[R1] If there is a vertex $v\in V$ with $d_{H}(v)=0$, then discard $v$ and recurse.
\item[R2] If there are two hyperedges $e_1,e_2\in E$ with $e_1\subseteq e_2$, then remove $e_2$ and recurse.
\item[R3] If there is a hyperedge $e\in E$ with $|e|=1$, then select $v$, where $e=\{v\}$ and recurse.
\item[B1] If there is a vertex $v\in V$ with $d_{H}(v)=1$, then let $e\in E$ denote the hyperedge with $v\in e$. Make one recursive call where $v$ is discarded, and one recursive call where $v$ is selected and all vertices from $e\setminus \{v\}$ are discarded.
\item[B2] Otherwise, select two hyperedges $e,e'$ such that $e$ is a smallest hyperedge and $|e\cap e'|\ge 1$. Order their vertices such that their common vertices appear first: $e=\{v_1,\dots,v_{|e|}\}$ and $e'=\{v_1,\dots,v_\ell,u_{\ell+1},\dots,u_{|e'|}\}$.
Make $|e|$ recursive calls; in the $i$th recursive call, $v_1,\dots,v_{i-1}$ are discarded and $v_i$ is selected.
\end{itemize}

\begin{theorem}\label{thm:r5}
	For any $k\ge 2$, \longversion{the described algorithm enumerates }the minimal transversals of a
    rank-$k$ hypergraph \shortversion{can be enumerated }in time $O((\beta_k)^n)$, where $\beta_k$ is the positive real root of
	\begin{align*}
	-1+x^{-1}+\sum_{i=3}^{k} (i-2)\cdot x^{-i} + \sum_{i=k+1}^{2k-1}(2k-i)\cdot x^{-i} = 0\enspace.
	\end{align*}
\end{theorem}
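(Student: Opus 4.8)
The plan is to verify correctness and then, measuring progress by $n=|V|$, to bound the number of leaves of the recursion tree by $\beta_k^{\,n+O(1)}$. Correctness is routine: H1/H2 are the halting rules (H1 performs a final minimality test), and R1--R3 are sound because a degree-$0$ vertex is never in a minimal transversal, a hyperedge containing another is redundant, and the lone vertex of a size-$1$ hyperedge is forced. For B1, if the degree-$1$ vertex $v$ (lying only in $e$) is in a minimal transversal, then $v$ is the private witness for $e$, so no other vertex of $e$ is selected and discarding $e\setminus\{v\}$ is valid; the two branches are exhaustive. For B2, every minimal transversal hits $e$ and hence has a well-defined first selected vertex $v_i$, placing it in branch $i$, so the branches are exhaustive and disjoint. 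Since R1--R3 strictly decrease $|V|+|E|$, the recursion depth is polynomial, and by \autoref{lem:measureanalysis} (taking $\eta:=|V|+|E|$) it suffices to bound the leaves. Here $\beta_k$ is well defined: the left-hand side $f(x)$ of the stated equation is strictly decreasing on $(1,\infty)$ with $f(1^+)>0$ and $f(x)\to -1$, so it has a unique root $\beta_k>1$.

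Reductions and halting rules create no branching. Rule B1 gives the branching vector $(1,|e|)$ when $|e|\ge 3$, whereas for $|e|=2$ discarding the degree-$1$ vertex shrinks $e$ to a size-$1$ hyperedge, triggering R3, so the vector is $(2,2)$. The largest branching number occurs for $|e|=3$ and equals the root of $x^{-1}+x^{-3}=1$. Because the equation defining $\beta_k$ contains, among others, the positive terms $x^{-1}$ and $x^{-3}$, we get $\beta_k^{-1}+\beta_k^{-3}\le 1$, so $\beta_k$ is at least that root and B1 never violates the target recurrence.

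The core is Rule B2, which I would analyze two levels deep. Since $e$ is a smallest hyperedge, the worst configuration has all hyperedges of size $k$ and a partner $e'$ of size $k$ meeting $e$ in a single vertex, so $|e\cup e'|=2k-1$. Branch $1$ selects the shared vertex, hitting $e$ and $e'$, for gain $1$. Each branch $i\in\{2,\dots,k\}$ discards $v_1,\dots,v_{i-1}$ (in particular the shared vertex) and selects $v_i$, for a first-level gain of $i$, after which $e'$ has lost its only common vertex and shrunk to size $k-1$. Thus the child contains a hyperedge of size $k-1$, so its smallest hyperedge has size at most $k-1$ and the next branching there uses at most $k-1$ branches with gains $1,\dots,k-1$. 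Combining the levels, the grandchildren of a B2 node have gains $i+j$ with $i\in\{2,\dots,k\}$, $j\in\{1,\dots,k-1\}$, plus the single gain-$1$ branch. The number of pairs with $i+j=g$ is $1$ for $g=1$, is $g-2$ for $3\le g\le k$, and is $2k-g$ for $k+1\le g\le 2k-1$, so the induced recurrence $\sum_g(\text{mult})\,x^{-g}=1$ is exactly the equation defining $\beta_k$. For $k=2$ this degenerates gracefully: the second level is simply the R3 forced by $e'$ shrinking to size $1$, recovering the vector $(1,3)$.

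The main obstacle is to make the two-level bound rigorous inside the first-applicable-rule algorithm, since B2 literally has only $|e|$ children and its single-level vector $(1,2,\dots,k)$ is too weak. I would prove $L(n)\le\beta_k^{\,n+O(1)}$, where $L(n)$ bounds the leaves over $n$-vertex instances, by strong induction on $n$, applying the hypothesis at the grandchildren of each B2 node rather than at its children. For $i\ge 2$ the decisive point is that $e'$, or some hyperedge of no larger size after intervening reductions, survives with size at most $k-1$, since R1--R3, selections, and discards never enlarge a hyperedge; hence the next branching on that child is on a hyperedge of size at most $k-1$. Writing $S:=\sum_{j=1}^{k-1}\beta_k^{-j}$, one then checks that every admissible child contributes at most $\beta_k^{-i}S$: a B2 child on a size-$\le(k-1)$ hyperedge attains this, while a B1 child contributes $\beta_k^{-i}$ times its branching sum ($\beta_k^{-1}+\beta_k^{-s}$, or $2\beta_k^{-2}$), whose terms are dominated by $S$, and reductions firing first only decrease $n$ further. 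I would also confirm that $|e|=|e'|=k$ with $|e\cap e'|=1$ is the worst case over all sizes and overlaps, so that every B2 node satisfies $\sum_{\mathrm{grandchildren}}\beta_k^{-(\mathrm{gain})}\le 1$, with equality only here. Summing closes the induction; since each node does polynomial work at polynomial depth, the leaf count $\beta_k^{\,n+O(1)}$ yields running time $O((\beta_k)^n)$ after absorbing polynomial factors.
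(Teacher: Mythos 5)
Your overall strategy is the same as the paper's: a two-level analysis of rule B2 in which, whenever the shared vertex is discarded, the partner hyperedge $e'$ shrinks to size at most $k-1$, and the multiset of grandchild gains ($1$, together with $i+j$ for $2\le i\le k$, $1\le j\le k-1$) reproduces exactly the recurrence defining $\beta_k$. Your multiplicity count is correct, your induction scheme applying the hypothesis at grandchildren is a sound way to formalize the two-level branching vector, and your treatment of B1 (distinguishing $(1,|e|)$ for $|e|\ge 3$ from $(2,2)$ when $|e|=2$, and deriving $\beta_k^{-1}+\beta_k^{-3}\le 1$ from the defining equation) is actually more careful than the paper's, whose crude ``$(1,2)$ in the worst case'' claim for B1 would not suffice on its own when $k=2$. (One caveat you share with the paper: your justification that a small hyperedge ``survives'' intervening reductions because no rule enlarges hyperedges overlooks that R3 can delete hyperedges wholesale; this can be repaired, but neither you nor the paper does it explicitly, so I set it aside.)

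The genuine gap is the case $|e\cap e'|=\ell\ge 2$, which you only promise to ``confirm'' is dominated by the $\ell=1$ configuration. This is not a routine check; it is where the paper spends the second half of its proof. For $\ell\ge 2$, branches $1,\dots,\ell$ select a vertex common to $e$ and $e'$, so no small hyperedge is guaranteed in those children, and their gains remain the raw values $1,2,\dots,\ell$; the grandchild multiset is $\{1,\dots,\ell\}\cup\{i+j:\ell+1\le i\le k,\ 1\le j\le k-\ell\}$. The undecomposed entries $2,\dots,\ell$ each contribute more to $\sum_g \beta_k^{-g}$ than any entry of the $\ell=1$ multiset other than $1$, so no naive sub-multiset or termwise domination argument against the $\ell=1$ vector goes through, and it is not evident that the sum stays at most $1$. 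The paper supplies the missing idea: expand each branch $i$ with $2\le i\le \ell$ by a further worst-case $(1,2,\dots,k)$-way branching, replacing the entry $i$ by $i+1,\dots,i+k$ --- legitimate because $\sum_{j=1}^{k}\beta_k^{-j}\ge 1$, since $\beta_k$ is strictly smaller than the branching number of $(1,\dots,k)$ --- and then observe that the resulting multiset is a sub-multiset of the $\ell=1$ multiset (the entries $k+2,\dots,k+\ell$, present in branch $\ell+1$ of the $\ell=1$ vector but absent from the $\ell\ge2$ vector, absorb the newly created entries $k+i$). Without this argument, or an equivalent one, your induction does not close at B2 nodes with $\ell\ge 2$, so the proof as proposed is incomplete.
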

\begin{proof}
	The correctness of the halting and reduction rules are easy to see.
	For the correctness of branching rule B1, it suffices to observe that a transversal containing $v$ and some other vertex from $e$ is not minimal.
	The correctness of B2 follows since the $i$th recursive call enumerates the minimal transversals such that the first vertex among $v_1,\dots,v_{|e|}$ they contain is $v_i$.
	
	As for the running time, a crude analysis gives the same running time as the
	analysis of \cite{Gaspers05}, since
	we can associate the branching vector $(1,|e|)$ with B1, which is $(1,2)$ in the worst case,
	and the branching vector $(1,2,\dots,k)$ with B2.
	
	Let us look at B2 more closely. In the worst case, $|e|=k$.
	Due to the reduction rules, we have that $|e\cap e'|=\ell<|e|$.
\begin{figure}[tb]
	\begin{center}
		\begin{tikzpicture}[scale=0.8,sibling distance=0pt, level distance=35pt]
		\tikzstyle{every node}=[minimum size=8mm]
		\tikzset{edge from parent/.style={draw, edge from parent path=
				{(\tikzparentnode) -- (\tikzchildnode)}}}
		\Tree [.0 1 2 $\dots$ $\ell$ [.$\ell+1$ {\footnotesize $\ell+2$} {\footnotesize $\ell+3$} {\footnotesize $\dots$} {\footnotesize $k+1$} ] [.$\ell+2$ {\footnotesize $\ell+3$} {\footnotesize $\ell+4$} {\footnotesize $\dots$} {\footnotesize $k+2$} ] [.$\dots$ ] [.$k$ {\footnotesize $k+1$} {\footnotesize $k+2$} {\footnotesize $\dots$} {\footnotesize $2k-1$} ] ]
		\end{tikzpicture}
		\caption{\label{fig:branchtree} Decreasing the number of vertices in recursive calls of %branching 
rule B2.}
	\end{center}
\end{figure}
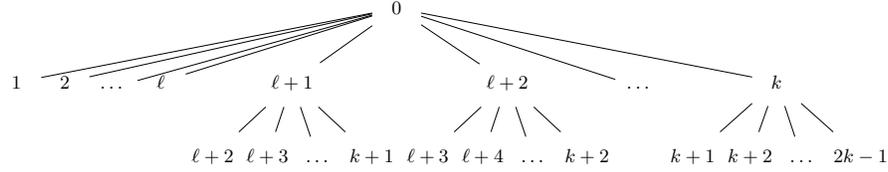
	We consider two cases. In the first case, $\ell=1$.
	Since $v_1$ is discarded in branches $2,\dots, k$, we have that the size of $e'$ is at most $k-1$ in each of these recursive calls, and the algorithm will either use branching rule B1 or branching rule B2 on a hyperedge of size at most $k-1$ in branches $2,\dots,k$.
	In the worst case, it uses branching rule B2 on a hyperedge of size $k-1$ in each of these branches, leading to the branching vector $(1,3,4,\dots,k+1,4,5,\dots,k+2,5,6,\dots,2k-1)$ whose recurrence \shortversion{solves to $\beta_k$:}%
	\begin{align*}
	T(n) = T(n-1) + \sum_{i=3}^k (i-2)\cdot T(n-i) + \sum_{i=k+1}^{2k-1}(2k-i)\cdot T(n-i)\shortversion{\enspace.}
	\end{align*}
	\longversion{solves to $\beta_k$. \par}%
	In the second case, $\ell\ge 2$, and we will show that this case is no worse than the first one.
	Since $v_1,\dots,v_\ell$ are discarded in branches $\ell+1,\dots, k$, the size of $e'$ is at most $k-\ell$ in each of these recursive calls, and in the worst case the algorithm will branch on a hyperedge of size $k-\ell$ in branches $\ell+1,\dots,k$, leading to the branching vector $(1,2,\dots,\ell,\ell+2,\ell+3,\dots,k+1,\ell+3,\ell+4,\dots,k+2,\dots,\allowbreak k+1,k+2,\dots,2k-\ell)$. See \autoref{fig:branchtree}. To see that this is no worse than the branching vector of the first case, follow each branch $i$ with $2\le i\le \ell$ by a $k$-way branching $(1,2,\dots,k)$, replacing the entry $i$ in the branching vector with $1+i,2+i,\dots,k+i$. Compared with the branching vector of the first case, the only difference in branches $i$, $2\le i\le \ell$, is that these have the additional entries $k+i$. But note that branch $\ell+1$ has entries $k+2,\dots,k+\ell$ in the first case but not in the second case. We conclude that the branching vector where
	entries $i$, $2\le i\le \ell$, are replaced by $1+i,2+i,\dots,k+i$ is a sub-vector of the one for $\ell=1$.
\qed
\end{proof}

Since this algorithm guarantees branching on hyperedges of size at most $k-1$ in certain cases, its running time outperforms the one in \cite{Gaspers05} for each $k\ge 3$.

\section{Lower bounds}

The graphs with a maximum number of maximal independent sets are the disjoint unions of triangles. They are hypergraphs of rank $2$ with $3^{n/3}$ minimal transversals.
%In this section, 
We generalize this lower bound to hypergraphs with larger rank.

\begin{theorem}\label{thm:mhslb}
	For any two integers $k,n>0$, there is an $n$-vertex hypergraph of rank $k$ with
\longversion{
	\begin{align*}
	  \binom{2\cdot k -1}{k}^{\lfloor n/(2\cdot k-1) \rfloor}
	\end{align*}
}\shortversion{
 $\binom{2\cdot k -1}{k}^{\lfloor n/(2\cdot k-1) \rfloor}$
}
	minimal transversals.
\end{theorem}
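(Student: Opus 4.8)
The plan is to generalize directly the triangle construction used for rank $2$. First I would define a single \emph{block} $B_k$ as the hypergraph on a set of $2k-1$ vertices whose hyperedges are \emph{all} $k$-element subsets of these vertices. The full construction is then the disjoint union of $\lfloor n/(2k-1)\rfloor$ vertex-disjoint copies of $B_k$, padded with the remaining $n-(2k-1)\lfloor n/(2k-1)\rfloor$ vertices as isolated vertices; this is an $n$-vertex hypergraph, and as soon as it contains at least one block its rank is exactly $k$. (For $n<2k-1$ the claimed bound is $\binom{2k-1}{k}^0=1$, which is trivially realizable.)

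The heart of the argument is to identify the minimal transversals of a single block $B_k$. I would first observe that a set $S\subseteq V(B_k)$ is a transversal if and only if its complement contains no hyperedge, that is, $|V(B_k)\setminus S|\le k-1$, which is equivalent to $|S|\ge k$. Consequently every transversal of size exactly $k$ is minimal (deleting any vertex leaves a set of size $k-1$, which is not a transversal), while no transversal of size larger than $k$ can be minimal, since it properly contains some $k$-subset, and that $k$-subset is itself a transversal. Hence the minimal transversals of $B_k$ are precisely its $k$-element subsets, of which there are $\binom{2k-1}{k}$.

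It then remains to pass from one block to the disjoint union. I would invoke the standard fact that, for vertex- and edge-disjoint hypergraphs, a set $S$ is a minimal transversal of their disjoint union if and only if $S$ meets each component in a minimal transversal of that component. Since every hyperedge lies in a single component, $S\cap V(B_k^{(i)})$ must be a transversal of the $i$th block; minimality of $S$ forces each such restriction to be minimal, and conversely any choice of one minimal transversal per block yields a minimal transversal of the union. The isolated padding vertices have degree $0$ and therefore never belong to a minimal transversal, so they contribute a factor of exactly $1$. Multiplying the counts over the $\lfloor n/(2k-1)\rfloor$ blocks gives exactly $\binom{2k-1}{k}^{\lfloor n/(2k-1)\rfloor}$ minimal transversals.

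The only delicate point is the product formula for minimal transversals of a disjoint union: one must argue carefully both that the restriction of a minimal transversal to each component stays \emph{minimal} and that the isolated vertices neither inflate nor deflate the count. Once this bookkeeping is settled, the block analysis and the final multiplication are routine.
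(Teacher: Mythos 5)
Your proposal is correct and follows essentially the same route as the paper: the same block $H_k$ on $2k-1$ vertices with all $k$-subsets as hyperedges, the same size argument showing the minimal transversals of a block are exactly its $k$-subsets, and the same disjoint union with isolated padding vertices. The only difference is that you spell out the product formula for minimal transversals of a disjoint union, which the paper treats as immediate.
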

\newcommand{\proofthmmhslb}{%
\begin{proof}
	Let $H_k=(V,E)$ be a hypergraph on $2\cdot k -1$ vertices where $E$ is the set of all subsets of $V$ of cardinality $k$.
	We claim that every subset of $k$ elements from $V$ is a minimal transversal of $H_k$. First, any transversal $X$ of $H_k$ contains at
	least $k$ vertices, otherwise $|V\setminus X|\ge k$ and there is least one hyperedge 
	%$s\in E$  
	that has an empty intersection with $X$. Second, any subset $X'$ of $V$ of cardinality $k$ is a hitting set for $H_k$. Indeed, as $|V\setminus X'|=k-1$,
	every hyperedge that does not intersect $X'$ has cardinality at most $k-1$, but $E$ contains no such hyperedge.
	This proves the claim that every subset of $k$ elements from $V$ is a minimal transversal of $H_k$. Thus, $H_k$ has
	\begin{align*}
	\binom{2\cdot k -1}{k}
	\end{align*}
	minimal transversals.
	The bound of the theorem is then achieved by a disjoint union of $\lfloor n/(2\cdot k-1) \rfloor$ copies of $H_k$, and
	$n-\lfloor n/(2\cdot k-1) \rfloor$ isolated vertices.
	\qed
\end{proof}
}
\longversion{\proofthmmhslb}

\longversion{
\section*{Acknowledgments}

We thank Fabrizio Grandoni for initial discussions on this research.

Serge Gaspers is the recipient of an ARC Future Fellowship (project number FT140100048) and acknowledges support under the ARC's Discovery Projects funding scheme (project number DP150101134).
NICTA is funded by the Australian Government through the Department of Communications and the Australian Research Council (ARC) through the ICT Centre of Excellence Program.
}

\shortversion{
 \bibliography{shortnames,literature}
 \bibliographystyle{serge-short}	
}
\longversion{
 \bibliography{longnames,literature}
 \bibliographystyle{plain}
}

\shortversion{
\newpage
\section*{Appendix: Proofs omitted in the main part}

\subsection*{Proof of \autoref{thm:ic}}
\proofthmic

\subsection*{Proof of \autoref{thm:mhslb}}
\proofthmmhslb
}

\end{document}